\documentclass[aps,pra,reprint,nofootinbib,floatfix,longbibliography]{revtex4-2}
\usepackage[T1]{fontenc}
\usepackage[utf8]{inputenc}
\usepackage{microtype}
\usepackage{amsmath,amssymb,amsthm,mathtools}
\usepackage{xcolor}
\usepackage{graphicx}
\usepackage[colorlinks=true,linkcolor=blue!60!black,citecolor=blue!60!black,urlcolor=blue!60!black]{hyperref}

\newtheorem{theorem}{Theorem}[section]
\newtheorem{lemma}[theorem]{Lemma}
\newtheorem{proposition}[theorem]{Proposition}
\newtheorem{corollary}[theorem]{Corollary}

\newcommand{\C}{\mathbb C}
\newcommand{\R}{\mathbb R}
\newcommand{\A}{\mathcal A}
\newcommand{\Hh}{\mathcal H}
\newcommand{\Tr}{\operatorname{Tr}}
\newcommand{\diag}{\operatorname{diag}}
\newcommand{\norm}[1]{\left\lVert #1\right\rVert}
\newcommand{\abs}[1]{\left|#1\right|}
\newcommand{\ket}[1]{\lvert #1\rangle}
\newcommand{\bra}[1]{\langle #1\rvert}
\newcommand{\dist}{\mathrm d}

\begin{document}

\title{Calibrated Helstrom geometry on the Bloch ball via Connes spectral distance}
\author{Kaushlendra Kumar}
\affiliation{School of Mathematical Sciences, Queen Mary University of London, Mile End Road, London E1 4NS, United Kingdom\\
\href{mailto:kaushlendra.kumar@qmul.ac.uk}{kaushlendra.kumar@qmul.ac.uk}}

\begin{abstract}
{\bf Abstract}: We show that the equal-prior Helstrom trace-distance geometry of qubit states is recovered from Connes spectral distance in a finite scalar--qubit--scalar model. The two scalar reference sectors couple isotropically to the qubit block through identity Dirac links, so that the full Bloch ball, including mixed states, inherits its standard chordal trace-distance geometry from the finite spectral metric. The scalar-sector distances serve a distinct calibration role: they determine the individual link lengths, satisfy a Pythagorean consistency relation, and reconstruct the middle-sector scale.
\end{abstract}

\maketitle

\section{Introduction}\label{sec:introduction}

Trace distance is the natural operational metric for equal-prior binary discrimination of qubit states~\cite{Helstrom1976}. More explicitly, if one of two known states, \(\rho\) or \(\sigma\), is prepared with equal prior probability, the optimal success probability after optimizing over measurements is
\begin{equation}\label{eq:helstrom-success}
p_{\mathrm{succ}}^{\mathrm{opt}}(\rho,\sigma)=\frac12\bigl(1+T(\rho,\sigma)\bigr),
\quad
T(\rho,\sigma)=\frac12\norm{\rho-\sigma}_1
\end{equation}
where \(\norm{X}_1=\Tr\sqrt{X^\dagger X}\) is the trace norm. For Hermitian \(X\), this is the sum of the absolute values of its eigenvalues. The corresponding optimal error probability is then
\begin{equation}\label{eq:helstrom-error}
p_{\mathrm{err}}^{\mathrm{opt}}(\rho,\sigma)
=1-p_{\mathrm{succ}}^{\mathrm{opt}}(\rho,\sigma)
=\frac12\bigl(1-T(\rho,\sigma)\bigr).
\end{equation}
Here we ask how the same qubit trace geometry appears from Connes spectral distance in a finite noncommutative model. For unequal priors, the optimal success probability depends instead on the trace norm of the weighted Helstrom operator \(\lambda\rho-(1-\lambda)\sigma\), with \(\lambda\in[0,1]\). The present work is confined to the equal-prior case.

Connes spectral distance~\cite{Connes1994} is a variational metric built from spectral metric data \((\A,\Hh,\pi,D)\), where \(\A\) is an involutive algebra, \(\Hh\) is a Hilbert space, \(\pi:\A\to\mathcal B(\Hh)\) is a representation on the space of bounded operators on $\Hh$, and \(D\) is a Dirac operator. States are normalized positive linear functionals \(\varphi:\A\to\C\). For two states \(\varphi,\psi\), the distance is
\begin{equation}\label{eq:connes-distance}
{\rm d}_D(\varphi,\psi)=
\sup_{\substack{a=a^*\\ L_D(a)\le1}}
\abs{\varphi(a)-\psi(a)},
\end{equation}
where the self-adjoint test element \(a\) is varied across the Lipschitz ball, \(L_D(a)=\norm{[D,\pi(a)]}\). In finite dimensions, the formula becomes a concrete matrix-norm optimization.

The finite-space analysis of Iochum--Krajewski--Martinetti~\cite{IochumKrajewskiMartinetti2001} shows why the bare irreducible qubit algebra \(M_2(\C)\) does not give the full Bloch-ball trace metric: a two-eigenvalue Dirac operator controls only Pauli directions transverse to its eigenaxis, leaving state pairs along the missing direction at infinite distance. We illustrate this finite-slice obstruction in Bloch coordinates in Figure~\ref{fig:bare-M2-slice}, while recalling the calculation in Appendix~\ref{app:bare-M2}.

\begin{figure}[t]
\centering
\includegraphics[width=0.72\linewidth]{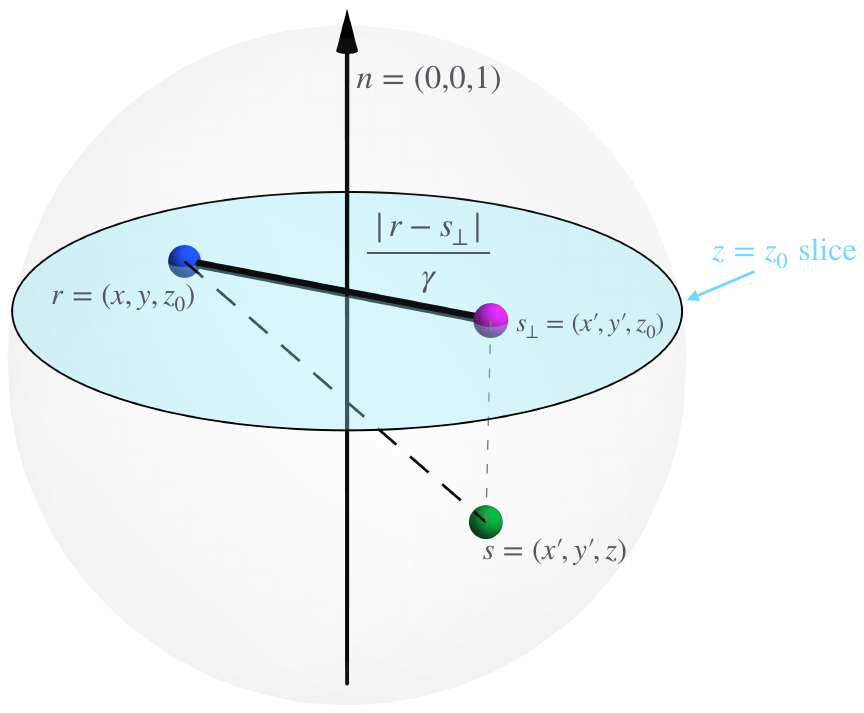}
\caption{Depiction of bare \(M_2(\C)\) finite-slice obstruction. In coordinates with Dirac axis \(n=(0,0,1)\), the bare two-eigenvalue Dirac operator controls only directions transverse to \(n\). Points in the same slice \(z=z_0\) have finite distance \(|r-s_\perp|/\gamma\), where \(\gamma=|d_+-d_-|\) is the Dirac eigenvalue gap. Any off-slice separation contains an unconstrained component parallel to \(n\), and the corresponding bare spectral distance is infinite.}
\label{fig:bare-M2-slice}
\end{figure}

The scalar-anchored model below removes this degeneracy by isotropically coupling the scalar sectors through identity blocks, so that all Pauli directions enter the Lipschitz ball. Other finite spectral data remove the obstruction in different ways. In the spin-\(1/2\) fuzzy sphere of D'Andrea--Lizzi--V\'arilly~\cite{DAndreaLizziVarilly2013} (in their \(N=1\) convention) with algebra \(M_2(\C)\), the exact qubit trace distance in Bloch coordinates $r,s\in B^3$ is recovered:
\begin{equation}\label{eq:fuzzy-spin-half-trace}
{\rm d}_{1/2}(\omega_{\rho(r)},\omega_{\rho(s)})=\frac12|r-s|,
\end{equation}
where the states on the Bloch ball \(B^3\) are \(\rho(r)=\frac12(I_2+r\cdot\sigma)\). Using the Hilbert--Schmidt operatorial formulation developed for the Moyal plane and fuzzy sphere~\cite{ScholtzChakraborty2013,DeviPrajapatMukhopadhyayChakrabortyScholtz2015}, Devi--Kumar--Chakraborty--Scholtz~\cite{KumarChakraborty2016} revisited these finite-distance calculations using Dirac eigenspinors; at \(n=1/2\) they recover the pure coherent-state sine law corresponding to Eq.~\eqref{eq:fuzzy-spin-half-trace} along with a linear interpolation for mixed states along a fixed chord. Recent finite-matrix qubit triples~\cite{LinXuWangChen2026,WangLinYou2026} also connect Connes distance with trace-distance-type quantities, coherence, discord, and unitary invariance, while phase-space models~\cite{LinHeng2020,LinHeng2022} exhibit additive and Pythagorean distance laws.

Here the route is different: the qubit block is anchored by two scalar reference sectors, each represented on a copy of \(\C^2\), the same two-component multiplicity as the qubit block. This representation choice, introduced in Sec.~\ref{sec:model}, is what allows the scalar--qubit couplings to be isotropic identity blocks rather than vector links selecting a preferred Bloch direction. The corresponding main result, proved in Sec.~\ref{sec:helstrom}, is
\begin{equation}\label{eq:intro-main-scale}
{\rm d}(\omega_\rho,\omega_\sigma)=\frac{2}{\Lambda}T(\rho,\sigma),
\qquad
\Lambda=\sqrt{\mu_L^2+\mu_R^2},
\end{equation}
which is the usual chordal trace-distance geometry of the full mixed-state Bloch ball, written in finite spectral units. Sec.~\ref{sec:calibration} then computes the scalar-sector distances, which determine the individual link lengths and satisfy
\[
{\rm d}(\omega_L,\omega_R)^2=\ell_L^2+\ell_R^2,
\qquad
\ell_L=\mu_L^{-1},\quad \ell_R=\mu_R^{-1},
\]
where \(\omega_L\) and \(\omega_R\) denote the two scalar-sector states. Finally, Sec.~\ref{sec:discussion} explains the relation between the one-scalar trace-norm mechanism and the two-reference calibration geometry; the one-scalar baseline itself is derived in Appendix~\ref{app:one-scalar}. Some supporting calculations for Secs.~\ref{sec:helstrom}--\ref{sec:calibration} are collected in Appendix~\ref{app:secondary}.

\section{The Isotropic Two-Anchor Model}\label{sec:model}

The spectral triple is \((\A,\Hh,\pi,D)\), with
\[
\begin{aligned}
\A&=\C_L\oplus M_2(\C)\oplus \C_R,\\
\pi(x,m,z)&=\diag(xI_2,m,zI_2),\\
\Hh&=\C^2\oplus \C^2\oplus \C^2,
\end{aligned}
\]
and the relevant states are
\[
\begin{aligned}
\omega_L(x,m,z)&=x,\\
\omega_\rho(x,m,z)&=\Tr(\rho m),\\
\omega_R(x,m,z)&=z,
\end{aligned}
\]
where \(\rho\) runs over qubit density matrices on \(\C^2\). We consider the isotropic Dirac operator
\[
D=
\begin{pmatrix}
0 & \mu_L I_2 & 0\\
\mu_L I_2 & 0 & \mu_R I_2\\
0 & \mu_R I_2 & 0
\end{pmatrix},
\qquad
\mu_L,\mu_R>0.
\]
Thus the scalar summands are represented on copies of \(\C^2\), and the off-diagonal Dirac links can be the identity blocks \(\mu_L I_2\) and \(\mu_R I_2\) between equal two-dimensional multiplicity spaces. No preferred Bloch direction is singled out. For \(a=(x,m,z)\), we write \(L_D(x,m,z):=L_D(a)\) and employ abbreviations
\[
\Delta_L:=m-xI_2,
\qquad
\Delta_R:=m-zI_2,
\]
such that for $a=(x,m,z)=a^*$,
\begin{equation}\label{eq:general-comm}
[D,\pi(a)] =
\begin{pmatrix}
0 & \mu_L\Delta_L & 0\\
-\mu_L\Delta_L & 0 & -\mu_R\Delta_R\\
0 & \mu_R\Delta_R & 0
\end{pmatrix}.
\end{equation}
The commutator depends only on $\Delta_L$ and $\Delta_R$, the differences between the middle observable and the two scalar values.

The norm in \(L_D\) is the operator norm, denoted \(\norm{\cdot}_{\mathrm{op}}\) when needed. We will routinely use the following $C^*$-algebra property for the operator norm in finite dimensions,
\begin{equation}\label{eq:cstar-norm}
\norm{A}_{\mathrm{op}}^2=\norm{A^*A}_{\mathrm{op}}
=\lambda_{\max}(A^*A),
\end{equation}
and, for block direct sums, we also have
\begin{equation}\label{eq:direct-sum-norm}
\norm{A_1\oplus A_2}_{\mathrm{op}}
=\max\{\norm{A_1}_{\mathrm{op}},\norm{A_2}_{\mathrm{op}}\}.
\end{equation}

\begin{lemma}\label{lem:comm-reduction}
If the eigenvalues of $m$ are $\lambda_1,\lambda_2$, then
\begin{equation}\label{eq:lipschitz}
L_D(x,m,z)^2
=
\max_{i=1,2}\Bigl(\mu_L^2(\lambda_i-x)^2+\mu_R^2(\lambda_i-z)^2\Bigr).
\end{equation}
\end{lemma}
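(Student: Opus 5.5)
We compute the operator norm through the $C^*$-identity \eqref{eq:cstar-norm}. Write $C:=[D,\pi(a)]$ as in Eq.~\eqref{eq:general-comm}. Since $a=a^*$, the scalars $x,z$ are real and $m$ is Hermitian, so $\Delta_L,\Delta_R$ are Hermitian. Then $C$ is anti-Hermitian, $C^*=-C$, and $C^*C=-C^2$.

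Next we compute $C^2$ blockwise. The middle diagonal entry is $-\mu_L^2\Delta_L^2-\mu_R^2\Delta_R^2$. The corner blocks form a $2\times2$ block matrix on the first and third summands, with entries $-\mu_L^2\Delta_L^2$, $-\mu_L\mu_R\Delta_L\Delta_R$, $-\mu_R\mu_L\Delta_R\Delta_L$ and $-\mu_R^2\Delta_R^2$. The blocks coupling the middle summand to the outer ones vanish, because $C$ maps the middle summand into the outer ones and vice versa. Hence $C^*C=-C^2$ is a direct sum of an outer part $B$ and a middle part $M:=\mu_L^2\Delta_L^2+\mu_R^2\Delta_R^2$. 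By \eqref{eq:direct-sum-norm}, $L_D(a)^2=\max\{\norm{B},\norm{M}\}$.

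To compare the two parts, note that $B=VV^*$ with $V=\begin{pmatrix}\mu_L\Delta_L\\ -\mu_R\Delta_R\end{pmatrix}$ (up to signs of the off-diagonal blocks, which do not affect the spectrum), while $M=V^*V$. Since $VV^*$ and $V^*V$ share their nonzero spectrum, $\norm{B}=\norm{M}$. Therefore $L_D(a)^2=\norm{M}_{\mathrm{op}}$.

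It remains to evaluate $\norm{M}$. Diagonalize $m=U\diag(\lambda_1,\lambda_2)U^*$. Then $\Delta_L=U\diag(\lambda_i-x)U^*$ and $\Delta_R=U\diag(\lambda_i-z)U^*$ are simultaneously diagonal, so
\[
M=U\diag\bigl(\mu_L^2(\lambda_i-x)^2+\mu_R^2(\lambda_i-z)^2\bigr)U^*.
\]
This matrix is positive, so its norm is its largest diagonal entry, which is exactly \eqref{eq:lipschitz}.

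The only step needing care is the identification $\norm{B}=\norm{M}$. Writing the outer block explicitly as $VV^*$, and checking that the signs match, is the main bookkeeping obstacle.
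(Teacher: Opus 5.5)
Your proof is correct, but it takes a different route from the paper's. The paper first conjugates by $U\oplus U\oplus U$, which fixes $D$ because the links are identity blocks. It then permutes the basis so that $[D,\pi(a)]$ becomes a direct sum $T_1\oplus T_2$ of two scalar $3\times3$ channels, and reads off the spectrum $\{0,\alpha_i^2+\beta_i^2,\alpha_i^2+\beta_i^2\}$ of each $T_i^*T_i$.

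You never diagonalize at the start. You use only that the commutator is anti-Hermitian and odd with respect to the middle/outer splitting. Then $C^*C$ splits into $B=VV^*$ and $M=V^*V$, and their shared nonzero spectrum gives the basis-free identity $L_D(a)^2=\norm{\mu_L^2\Delta_L^2+\mu_R^2\Delta_R^2}$. Isotropy enters only at the last step, where you diagonalize $\Delta_L,\Delta_R$ simultaneously. The sign bookkeeping you flag is moot: with $V=\bigl(\begin{smallmatrix}\mu_L\Delta_L\\ \mu_R\Delta_R\end{smallmatrix}\bigr)$ one has $B=VV^*$ exactly, with no sign adjustment.

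Your route buys a structural intermediate formula that generalizes. It holds verbatim for any number of scalar anchors attached to the qubit block (with no anchor--anchor links). For non-identity links it still holds once $V$ is replaced by the middle-to-outer block column of the commutator, which makes it a natural starting point for the anisotropic deformations mentioned in the outlook. The paper's route buys explicitness: it exhibits the two independent eigenvalue channels directly, and these are exactly the costs $Q_\pm$ used in Step~1 of Theorem~\ref{thm:mixed}.
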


\begin{proof}
The proof is a two-channel decomposition. First we diagonalize the represented operators using \(\widetilde U=U\oplus U\oplus U\) where \(U^*mU=\diag(\lambda_1,\lambda_2)\). Since the link blocks in \(D\) are \(\mu_L I_2\) and \(\mu_R I_2\), one has \(\widetilde U^*D\widetilde U=D\), while \(\widetilde U^*\pi(x,m,z)\widetilde U=\pi(x,U^*mU,z)\). The operator norm is invariant under this unitary conjugation, so it suffices to treat diagonal \(m\).

Next we reorder the eigenbasis to display the two channels. To that end, let \(L_i,M_i,R_i\) denote the $\C^6$-vectors in the left, middle, and right copies of \(\C^2\) such that the natural row and column order is \((L_1,L_2,M_1,M_2,R_1,R_2)\). We then reorder it as \((L_1,M_1,R_1,L_2,M_2,R_2)\), since permutation is an allowed unitary operation inside the operator norm, and the resulting commutator takes the direct-sum form \(T_1\oplus T_2\):
\[
T_i=
\begin{pmatrix}
0&\alpha_i&0\\
-\alpha_i&0&-\beta_i\\
0&\beta_i&0
\end{pmatrix},
\]
with
\[
\alpha_i=\mu_L(\lambda_i-x),
\qquad
\beta_i=\mu_R(\lambda_i-z).
\]
By \eqref{eq:direct-sum-norm}, it remains to compute one channel. A direct multiplication gives
\[
T_i^*T_i=
\begin{pmatrix}
\alpha_i^2&0&\alpha_i\beta_i\\
0&\alpha_i^2+\beta_i^2&0\\
\alpha_i\beta_i&0&\beta_i^2
\end{pmatrix},
\]
whose eigenvalues are
\[
0,\qquad
\alpha_i^2+\beta_i^2,\qquad
\alpha_i^2+\beta_i^2.
\]
By Eq.~\eqref{eq:cstar-norm},
\[
\norm{T_i}_{\mathrm{op}}^2=\alpha_i^2+\beta_i^2
=\mu_L^2(\lambda_i-x)^2+\mu_R^2(\lambda_i-z)^2.
\]
Taking the larger of the two channels proves \eqref{eq:lipschitz}.
\end{proof}

\section{Exact Helstrom Geometry on the Bloch Ball}\label{sec:helstrom}

The key metric statement of this paper is the following exact identification on the full qubit state space.

\begin{theorem}[Calibrated trace distance]\label{thm:mixed}
For any two middle-sector qubit states $\omega_\rho,\omega_\sigma$,
\begin{equation}\label{eq:mixed-main}
{\rm d}(\omega_\rho,\omega_\sigma)
=
\frac{2}{\Lambda}\,T(\rho,\sigma),
\qquad
\Lambda=(\mu_L^2+\mu_R^2)^{1/2}.
\end{equation}
\end{theorem}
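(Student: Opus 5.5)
The plan is to reduce the supremum in \eqref{eq:connes-distance} to a one-parameter optimization using Lemma~\ref{lem:comm-reduction}, and then prove matching upper and lower bounds. For $a=(x,m,z)=a^*$ we have $\omega_\rho(a)-\omega_\sigma(a)=\Tr((\rho-\sigma)m)$. Since $\Tr(\rho-\sigma)=0$, this difference is unchanged when $m$ is replaced by $m-cI_2$. The scalar entries $x,z$ do not appear in the objective at all; they only enter the constraint. Writing $X=\rho-\sigma$, which is Hermitian and traceless with eigenvalues $\pm T(\rho,\sigma)$, the quantity to compute is
\[
{\rm d}(\omega_\rho,\omega_\sigma)=\sup\bigl\{\abs{\Tr(Xm)} : \exists\, x,z\in\R,\ \mu_L^2(\lambda_i-x)^2+\mu_R^2(\lambda_i-z)^2\le1,\ i=1,2\bigr\}.
\]

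\emph{Upper bound.} Fix an admissible $(x,m,z)$ with eigenvalues $\lambda_1\ge\lambda_2$. By Cauchy--Schwarz, $\Lambda\abs{\lambda_i-c}\le\bigl(\mu_L^2(\lambda_i-x)^2+\mu_R^2(\lambda_i-z)^2\bigr)^{1/2}\le1$, where
\[
c=\frac{\mu_L^2x+\mu_R^2z}{\Lambda^2}.
\]
To check this, write $\lambda_i-c=\bigl(\mu_L\cdot\mu_L(\lambda_i-x)+\mu_R\cdot\mu_R(\lambda_i-z)\bigr)/\Lambda^2$ and apply Cauchy--Schwarz to the two-vector $(\mu_L,\mu_R)$. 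Hence $\norm{m-cI_2}_{\mathrm{op}}\le 1/\Lambda$, and trace-norm/operator-norm duality gives
\[
\abs{\Tr(Xm)}=\abs{\Tr(X(m-cI_2))}\le\norm{X}_1\,\norm{m-cI_2}_{\mathrm{op}}\le \frac{2T}{\Lambda}.
\]
The bound $\norm{X}_1\le 2T$ holds by definition (it is in fact an equality).

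\emph{Lower bound.} Take $x=z=0$ and $m=\Lambda^{-1}(P_+-P_-)$, where $P_\pm$ are the spectral projections of $X$ onto its positive and negative eigenspaces. If $X=0$ the claim is trivial. The eigenvalues of $m$ are $\pm1/\Lambda$, so Lemma~\ref{lem:comm-reduction} gives $L_D^2=(\mu_L^2+\mu_R^2)/\Lambda^2=1$, and $\Tr(Xm)=\norm{X}_1/\Lambda=2T/\Lambda$. Together with the upper bound, this proves \eqref{eq:mixed-main}.

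The step that needs the most care is the upper bound. The two scalar anchors $x,z$ are free, so one might worry that choosing them cleverly could enlarge the spread $\lambda_1-\lambda_2$ beyond $2/\Lambda$. The weighted-mean choice of $c$ together with Cauchy--Schwarz rules this out. Equivalently, for each $\lambda_i$ the constraint is a disk in the scaled $(x,z)$-plane, and both disks must contain the common point $(x,z)$; this forces $\abs{\lambda_1-\lambda_2}\le 2/\Lambda$. The second point to verify is that the Lipschitz norm in \eqref{eq:connes-distance} is only restricted to self-adjoint elements. This means $x,z$ are real and $m$ is Hermitian, which is exactly the setting in which Lemma~\ref{lem:comm-reduction} applies.
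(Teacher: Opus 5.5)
Your proof is correct, but the key inequality comes from a different mechanism than in the paper. The paper fixes $m$ and computes the exact minimal cost over scalar extensions, $C(m)=\inf_{x,z}L_D(x,m,z)=\Lambda\gamma_m/2$. It does this by averaging the two channel costs around the spectral midpoint, $\max\{Q_+,Q_-\}\ge\frac12(Q_++Q_-)=\frac14\Lambda^2\gamma_m^2+\mu_L^2u^2+\mu_R^2v^2$. It then optimizes the rank-one projection $P$ by hand in the eigenbasis of $\Delta$.

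You instead fix the anchors $(x,z)$ and show that every admissible eigenvalue lies within $1/\Lambda$ of $c=(\mu_L^2x+\mu_R^2z)/\Lambda^2$. This is the same point $\lambda_*$ the paper uses in Appendix~\ref{app:secondary} for $\dist(\omega_L,\omega_R)$. Your Cauchy--Schwarz step is the identity $Q(\lambda)=\mu_L^2(\lambda-x)^2+\mu_R^2(\lambda-z)^2=\Lambda^2(\lambda-c)^2+Q(c)$ with $Q(c)\ge0$ dropped. You then finish with trace-norm/operator-norm duality instead of the explicit projection optimization.

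Your route is shorter because it needs only an inequality valid for every admissible triple. You never have to show that the infimum defining $C(m)$ is attained, and you never use the midpoint--gap parametrization. The duality step also does not depend on the middle block being two-level.

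The paper's route yields more information: the exact fiber cost, its independence of $\bar\lambda_m$ and $P$, and the optimal extension $x=z=\bar\lambda_m$. These support its reading that middle--middle distances see only $\Lambda$. Your disk remark is in fact equivalent to the paper's criterion $C(m)\le1$. Both arguments use the same extremal element, $m=\Lambda^{-1}(2P_+-I_2)$ with $x=z=0$.
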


\begin{proof}
If \(\rho=\sigma\), both sides vanish. Assume henceforth that \(\Delta:=\rho-\sigma\ne0\). The middle-state difference is
\[
\omega_\rho(a)-\omega_\sigma(a)=\Tr(\Delta m),
\]
and it is independent of the scalar entries \(x,z\). We first eliminate the scalar variables for a fixed middle observable, then optimize the remaining qubit direction.

\paragraph*{Step 1: scalar-anchor reduction.}
For fixed \(m\), the scalar variables \(x,z\) do not affect the middle-state difference; they only determine the Lipschitz seminorm of the completed algebra element. Thus the relevant question is whether the fiber over \(m\) intersects the Lipschitz ball:
\[
\exists\,x,z:\ L_D(x,m,z)\le1 .
\]
This is equivalent to \(C(m)\le1\), where
\[
C(m):=\inf_{x,z}L_D(x,m,z).
\]
Thus taking the infimum over \(x,z\) does not restrict the variational problem; it tests whether the fiber over \(m\) meets the Lipschitz ball. Next, we write \(m=m^*\in M_2(\C)\) in midpoint--gap form using spectral decomposition:
\begin{equation}\label{eq:middle-param}
\begin{gathered}
\bar\lambda_m:=\frac{\lambda_++\lambda_-}{2},
\qquad
\gamma_m:=\lambda_+-\lambda_-,\\
m=\bar\lambda_m I_2+\frac{\gamma_m}{2}(2P-I_2),
\end{gathered}
\end{equation}
where \(P\) is the rank-one projection onto the largest eigensector, \(\lambda_+\ge\lambda_-\), and determines the qubit direction. Since \(\Tr(\Delta)=0\), we get
\[
\omega_\rho(m)-\omega_\sigma(m)=\Tr(\Delta m)=\gamma_m\,\Tr(\Delta P).
\]
By Lemma~\ref{lem:comm-reduction}, the two squared channel terms are
\[
Q_\pm(x,z)=
\mu_L^2\left(\bar\lambda_m\pm\frac{\gamma_m}{2}-x\right)^2+
\mu_R^2\left(\bar\lambda_m\pm\frac{\gamma_m}{2}-z\right)^2.
\]
Thus \(L_D(x,m,z)^2=\max\{Q_+(x,z),Q_-(x,z)\}\).
Writing \(u=x-\bar\lambda_m\) and \(v=z-\bar\lambda_m\), one has the chain
\[
\begin{aligned}
\max\{Q_+,Q_-\}
&\ge \frac{Q_++Q_-}{2} \\
&=\frac{\Lambda^2\gamma_m^2}{4}+\mu_L^2u^2+\mu_R^2v^2 \\
&\ge \frac{\Lambda^2\gamma_m^2}{4}.
\end{aligned}
\]
Equality holds at \(u=v=0\), i.e. at \(x=z=\bar\lambda_m\), where the two channel terms coincide. Therefore, the infimum is a minimum and we have
\[
\begin{aligned}
C(m)^2
&=\inf_{x,z}\max\{Q_+(x,z),Q_-(x,z)\}\\
&=\frac{\Lambda^2\gamma_m^2}{4},
\qquad
C(m)=\frac{\Lambda\gamma_m}{2}.
\end{aligned}
\]
Note that \(C(m)\) depends only on the eigenvalue gap \(\gamma_m\), not on the midpoint \(\bar\lambda_m\) or on the projection direction \(P\). Since the minimum is attained at \(x=z=\bar\lambda_m\), the fiber criterion reduces the original variational problem to
\begin{align}\label{eq:middle-variational-reduction}
\dist(\omega_\rho,\omega_\sigma)
&=
\sup_{\substack{m=m^*,\,x,z\\
L_D(x,m,z)\le1}}
\abs{\Tr(\Delta m)}
\nonumber\\
&=
\sup_{\substack{m=m^*\\ C(m)\le1}}
\abs{\Tr(\Delta m)}
\nonumber\\
&=
\sup_{\substack{\gamma_m\ge0,\;P\\ \Lambda\gamma_m/2\le1}}
\gamma_m\,\abs{\Tr(\Delta P)}\\
&=
\frac{2}{\Lambda}\sup_P\abs{\Tr(\Delta P)}.
\nonumber
\end{align}

\paragraph*{Step 2: qubit-direction optimization.}
Since \(\Delta\) is traceless and Hermitian on a qubit, its eigenvalues are \(\pm\eta\) for some \(\eta>0\). Let \(e_\pm\) be corresponding unit eigenvectors, and let \(P=\ket{\psi}\bra{\psi}\) with \(\psi=a e_+ + b e_-\), \(|a|^2+|b|^2=1\). Then
\[
\Tr(\Delta P)=\langle\psi,\Delta\psi\rangle
=\eta(|a|^2-|b|^2),
\]
so \(\abs{\Tr(\Delta P)}\le\eta\). Equality is attained at the positive or negative spectral projection of \(\Delta\). Therefore using  \(\norm{\Delta}_1=|\eta|+|-\eta|=2\eta\) for this Hermitian \(\Delta\) we get
\[
\sup_P\abs{\Tr(\Delta P)}=\eta=\frac12\norm{\Delta}_1.
\]
Putting the two optimization steps together, and using \(T(\rho,\sigma)=\norm{\Delta}_1/2\), gives the upper bound in \eqref{eq:mixed-main}.

\paragraph*{Attainment.}
Let \(P_+\) be the positive spectral projection of \(\Delta\), set \(\gamma_m=2/\Lambda\), and use the minimizing scalar extension \(x=z=\bar\lambda_m\). Since the midpoint does not affect \(\Tr(\Delta m)\), we can set \(\bar\lambda_m=0\) without loss of generality and write
\[
m=\frac1{\Lambda}(2P_+-I_2),
\qquad x=z=0.
\]
For this element, the two eigenvalue channels have costs
\[
Q_+=Q_-=\Lambda^2\left(\frac1{\Lambda}\right)^2=1,
\]
so \(L_D(x,m,z)=1\). Moreover,
\[
\abs{\Tr(\Delta m)}
=\frac{2}{\Lambda}\abs{\Tr(\Delta P_+)}
=\frac1{\Lambda}\norm{\Delta}_1
=\frac{2}{\Lambda}T(\rho,\sigma).
\]
Thus the upper bound is attained, proving equality.
\end{proof}

\paragraph*{Calibrated discrimination probabilities.}
The spectral scale \(2/\Lambda\) rescales all middle-sector distances uniformly, while the inverse factor \(\Lambda/2\) converts the raw spectral distance back into trace distance. Writing \(\dist_{\rho\sigma}:=\dist(\omega_\rho,\omega_\sigma)\), the theorem gives the calibrated readout
\begin{equation}\label{eq:calibrated-readout}
T(\rho,\sigma)=\frac{\Lambda}{2}\,\dist_{\rho\sigma},
\quad
p_{\mathrm{succ}}^{\mathrm{opt}}(\rho,\sigma)
=\frac12\left(1+\frac{\Lambda}{2}\dist_{\rho\sigma}\right),
\end{equation}
where the success-probability formula is just Eq.~\eqref{eq:helstrom-success} after calibration. The error probability follows from Eq.~\eqref{eq:helstrom-error}. Thus increasing the couplings \(\mu_L,\mu_R\) shrinks spectral lengths, while the calibrated combination \(\Lambda \dist/2\) is exactly the trace distance controlling the discrimination task. No purity assumption is made: the identification is global on the full mixed-state Bloch ball.

\begin{corollary}[Bloch-ball form]\label{cor:bloch-ball}
For Bloch ball states,
\[
\rho(r)=\frac12(I+r\cdot \sigma),
\quad
\rho(s)=\frac12(I+s\cdot \sigma),\quad
|r|,|s|\le 1,
\]
the calibrated distance is~\cite{NielsenChuang2010}
\begin{equation}\label{eq:bloch-ball}
\dist_{rs}:=\dist(\omega_{\rho(r)},\omega_{\rho(s)})=\frac{|r-s|}{\Lambda}.
\end{equation}
\end{corollary}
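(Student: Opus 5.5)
The plan is to reduce Corollary~\ref{cor:bloch-ball} to Theorem~\ref{thm:mixed} by computing the trace distance between two Bloch-ball states. Theorem~\ref{thm:mixed} gives \(\dist(\omega_{\rho(r)},\omega_{\rho(s)})=\frac{2}{\Lambda}T(\rho(r),\rho(s))\). It therefore suffices to show \(T(\rho(r),\rho(s))=\frac12|r-s|\).

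First I write the difference of the two states as
\[
\Delta=\rho(r)-\rho(s)=\tfrac12(r-s)\cdot\sigma .
\]
Set \(w:=r-s\in\R^3\). The Pauli anticommutation relations \(\sigma_j\sigma_k+\sigma_k\sigma_j=2\delta_{jk}I_2\) give \((w\cdot\sigma)^2=|w|^2I_2\). Since \(w\cdot\sigma\) is also Hermitian and traceless, its eigenvalues are \(\pm|w|\). Hence \(\Delta\) has eigenvalues \(\pm\frac12|w|\), which is exactly the quantity \(\eta=\frac12|r-s|\) appearing in Step~2 of the proof of Theorem~\ref{thm:mixed}.

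The trace norm is then \(\norm{\Delta}_1=|w|\), so \(T(\rho(r),\rho(s))=\frac12|r-s|\). Substituting into Eq.~\eqref{eq:mixed-main} gives \(\dist_{rs}=\frac{2}{\Lambda}\cdot\frac12|r-s|=|r-s|/\Lambda\).

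The case \(r=s\) is trivial, and the argument covers all \(|r|,|s|\le1\), mixed states included, because Theorem~\ref{thm:mixed} makes no purity assumption. There is no real obstacle. The only point needing care is the eigenvalue computation for \(w\cdot\sigma\), which follows from the anticommutation identity.
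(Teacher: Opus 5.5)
Your proposal is correct and follows the paper's argument: write $\rho(r)-\rho(s)=\tfrac12(r-s)\cdot\sigma$, use $\bigl((r-s)\cdot\sigma\bigr)^2=|r-s|^2 I$ to get eigenvalues $\pm\tfrac12|r-s|$ and hence $\norm{\rho(r)-\rho(s)}_1=|r-s|$, then apply Theorem~\ref{thm:mixed}. You also note that tracelessness forces one eigenvalue of each sign, which the paper leaves implicit.
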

This is easily seen as follows. Set \(a=r-s\) and use \((a\cdot\sigma)^2=|a|^2I\), so that the operator \(\rho(r)-\rho(s)=\frac12a\cdot\sigma\) has eigenvalues \(\pm |r-s|/2\), yielding
\[
\norm{\rho(r)-\rho(s)}_1
=\left|\frac{|r-s|}{2}\right|
+\left|-\frac{|r-s|}{2}\right|
=|r-s|.
\]
Theorem~\ref{thm:mixed} then produces \eqref{eq:bloch-ball} and thus the calibrated success probability in this case is
\begin{equation}\label{eq:bloch-success}
p_{\mathrm{succ}}^{\mathrm{opt}}(\rho(r),\rho(s))
=\frac12\left(1+\frac{|r-s|}{2}\right)
=\frac12\left(1+\frac{\Lambda}{2}\dist_{rs}\right).
\end{equation}
The middle-sector chord \(\Lambda^{-1}|r-s|\) is displayed together with the scalar calibration lengths in Fig.~\ref{fig:scalar-calibration}.

\section{Two-Reference Calibration Geometry}\label{sec:calibration}

The same finite data also determine a two-reference scalar calibration geometry around the qubit block. We now record the pure-state boundary and the distances involving the two outer scalar sectors.

\begin{proposition}\label{prop:secondary}
In the isotropic model one has
\begin{align}
\dist(\omega_L,\omega_R)&=\sqrt{\frac1{\mu_L^2}+\frac1{\mu_R^2}}, \label{eq:LR}\\
\dist(\omega_L,\omega_\rho)&=\frac1{\mu_L},
\qquad
\dist(\omega_R,\omega_\rho)=\frac1{\mu_R}, \label{eq:scalar-middle}\\
\dist(\omega_v,\omega_w)&=\frac{2}{\Lambda}\sin\frac{\Theta(v,w)}2, \label{eq:vw}
\end{align}
for every qubit state $\omega_\rho$ and all pure middle states \(\omega_v:=\omega_{\ket v\bra v}\), \(\omega_w:=\omega_{\ket w\bra w}\), where \(v,w\in\C^2\) are unit vectors and \(\Theta(v,w)\) is the angle between their Bloch vectors.
\end{proposition}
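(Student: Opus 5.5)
Equation \eqref{eq:vw} is immediate from Theorem~\ref{thm:mixed} and Corollary~\ref{cor:bloch-ball}. Pure states have unit Bloch vectors $r,s$. For these, $|r-s|^2=2-2\cos\Theta=4\sin^2(\Theta/2)$, so $\dist(\omega_v,\omega_w)=|r-s|/\Lambda=\frac{2}{\Lambda}\sin\frac{\Theta}{2}$. The real work is the scalar distances, and both rest on Lemma~\ref{lem:comm-reduction}. The Lipschitz condition $L_D\le1$ says exactly that every eigenvalue $\lambda_i$ of $m$ satisfies
\[
\mu_L^2(\lambda_i-x)^2+\mu_R^2(\lambda_i-z)^2\le1 .
\]
In other words, each point $(\lambda_i,\lambda_i)$ lies in the ellipse $E$ of $\mu$-weighted radius $1$ centred at $(x,z)$.

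\emph{Distance $\dist(\omega_L,\omega_\rho)$.} The quantity to bound is $|x-\Tr(\rho m)|$. Since $\rho$ is a state, $\Tr(\rho m)$ is a convex combination of $\lambda_1,\lambda_2$. Each channel gives $\mu_L|\lambda_i-x|\le1$, so $|\Tr(\rho m)-x|\le\max_i|\lambda_i-x|\le1/\mu_L$. For attainment take $m=\lambda I_2$, $x=\lambda-1/\mu_L$ and $z=\lambda$. The single channel cost is then $\mu_L^2\mu_L^{-2}+0=1$, and the difference equals $1/\mu_L$. The argument for $\dist(\omega_R,\omega_\rho)=1/\mu_R$ is the same with $L$ and $R$ interchanged.

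\emph{Distance $\dist(\omega_L,\omega_R)$.} The target is $\sup|x-z|$. Here $m$ enters only through the constraint, so I would choose $m$ to make the constraint as weak as possible. Taking $m=\lambda I_2$ gives a single channel, which is no worse than any $m$: any admissible $m$ has an eigenvalue $\lambda_1$ that alone satisfies the channel inequality. The problem therefore reduces to
\[
\sup\{\,|x-z| : \exists\lambda,\ \mu_L^2(\lambda-x)^2+\mu_R^2(\lambda-z)^2\le1\,\}.
\]
Set $p=x-\lambda$ and $q=\lambda-z$, so that $x-z=p+q$ subject to $\mu_L^2p^2+\mu_R^2q^2\le1$. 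Cauchy--Schwarz gives
\[
|p+q|\le\sqrt{\mu_L^{-2}+\mu_R^{-2}}\,\sqrt{\mu_L^2p^2+\mu_R^2q^2}\le\sqrt{\mu_L^{-2}+\mu_R^{-2}} .
\]
Equality holds at $p=\mu_L^{-2}/S$ and $q=\mu_R^{-2}/S$, where $S=\sqrt{\mu_L^{-2}+\mu_R^{-2}}$. This yields \eqref{eq:LR}.

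The step I expect to be the main obstacle is the reduction to scalar $m$ in the $L$--$R$ computation. The justification is that the Lipschitz condition is a maximum over channels, so it already implies the corresponding inequality for any single eigenvalue $\lambda_1$. That condition alone bounds $|x-z|$ by the Cauchy--Schwarz estimate, so the upper bound holds for every admissible $(x,m,z)$ with no further argument. The explicit scalar-$m$ element then attains the bound.
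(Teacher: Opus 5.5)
Your proposal is correct and follows essentially the paper's proof. The scalar--middle bound uses the convex-hull estimate plus a single channel, with a scalar $m$ for attainment. The pure-state formula is read off from Theorem~\ref{thm:mixed} and the Bloch chord. The $L$--$R$ distance comes from reducing to one eigenvalue channel and optimizing with scalar $m=\lambda I_2$. The only difference is cosmetic: you bound $|x-z|$ directly by a weighted Cauchy--Schwarz inequality, whereas the paper fixes $(x,z)$, minimizes $Q(\lambda)$ at $\lambda_*=(\mu_L^2x+\mu_R^2z)/\Lambda^2$, and obtains the fiber cost $C_{LR}(x,z)=\mu_L\mu_R|x-z|/\Lambda$. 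That is the same inequality read in the dual direction.
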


The derivation is collected in Appendix~\ref{app:secondary}. Equation~\eqref{eq:scalar-middle} shows that each scalar sector sits at a fixed spectral radius from the entire qubit sector, independently of \(\rho\). The optimal middle observable can be chosen proportional to \(I_2\), so the outer sectors see the qubit block as a whole, while qubit distinguishability is carried by middle--middle distances.

Equation~\eqref{eq:vw} is the pure-state restriction of the Bloch-ball theorem. If \(P_v=\ket v\bra v=\rho(r_v)\) and \(P_w=\ket w\bra w=\rho(r_w)\), with \(r_v,r_w\in S^2\), then
\[
|r_v-r_w|=2\sin\frac{\Theta(v,w)}2,
\]
so the spectral metric on pure states is the Euclidean chordal metric on the Bloch sphere, not the geodesic arc. This is the same sine-law structure that appeared at the spin-\(1/2\) fuzzy-sphere endpoint in Devi \textit{et al.}~\cite{KumarChakraborty2016},
\[
\dist_{1/2}(\omega_\theta,\omega_0)=r_{1/2}\sin\frac{\theta}{2},
\]
where \(\omega_0\) and \(\omega_\theta\) are pure coherent states separated by Bloch angle \(\theta\). The coefficient \(r_{1/2}\) is the fuzzy-sphere radius normalization; in Eq.~\eqref{eq:vw} it is replaced by \(2/\Lambda\), fixed by the scalar links. The mixed states discussed there lie on a selected chord, whereas Theorem~\ref{thm:mixed} gives the arbitrary-pair formula on the whole Bloch ball.

The scalar--scalar formula should be read differently. The scalar states are not points of the Bloch ball. They live on different central summands, so the scalar-sector lengths are not Helstrom probabilities and are not distances from the center \(r=0\). Their role is to record the finite link lengths by which the scalar references see the qubit block as a whole.

This gives a direct calibration reading. If
\[
\ell_L:=\dist(\omega_L,\omega_\rho)=\frac1{\mu_L},
\qquad
\ell_R:=\dist(\omega_R,\omega_\rho)=\frac1{\mu_R},
\]
then the combined scale appearing in the middle-sector Helstrom formula is fixed by the scalar-sector measurements:
\[
\Lambda=\sqrt{\ell_L^{-2}+\ell_R^{-2}}.
\]
The scalar--scalar distance provides an independent consistency relation,
\[
\dist(\omega_L,\omega_R)^2=\ell_L^2+\ell_R^2.
\]
Thus the scalar sectors reconstruct the combined middle-sector scale, resolve the individual link strengths hidden from middle--middle distances, and provide a Pythagorean check. Middle--middle distances see only \(\Lambda\); the radii \(\ell_L,\ell_R\) resolve the split, and the scalar--scalar distance checks their quadratic combination.

The two roles of the geometry are summarized in Fig.~\ref{fig:scalar-calibration}: the qubit block carries the calibrated trace/chord metric, while the scalar sectors record state-independent link lengths and the Pythagorean scalar-sector consistency relation.

\begin{center}
\refstepcounter{figure}\label{fig:scalar-calibration}
\centering
\includegraphics[width=0.86\columnwidth]{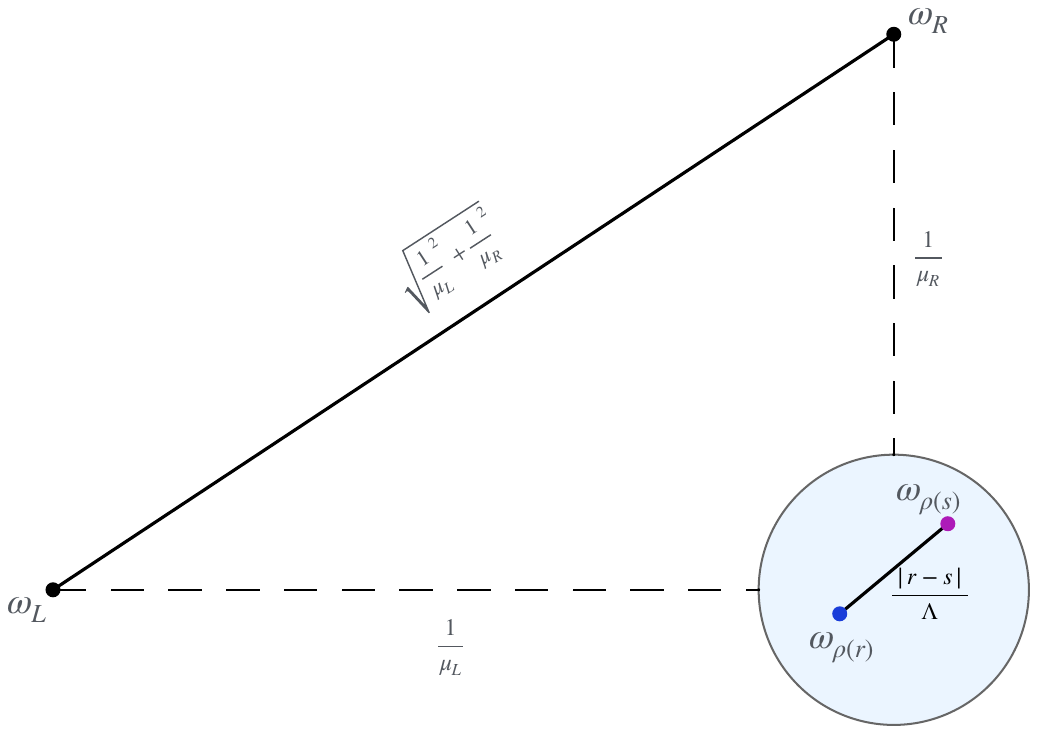}
\par\smallskip
\begin{minipage}{0.96\columnwidth}
\small \textbf{FIG.~\thefigure.} Scalar-anchored calibration geometry. The scalar states \(\omega_L,\omega_R\) define link lengths \(\ell_L=\mu_L^{-1}\), \(\ell_R=\mu_R^{-1}\), with \(\dist(\omega_L,\omega_R)^2=\ell_L^2+\ell_R^2\). Inside the qubit block, the middle-sector distance is the calibrated chord \(\dist(\omega_{\rho(r)},\omega_{\rho(s)})=\Lambda^{-1}|r-s|\).
\end{minipage}
\end{center}

\paragraph{Scalar-route consistency.}
The dashed route in Fig.~\ref{fig:scalar-calibration} should not be interpreted as a preferred path through a fixed middle state. Equation~\eqref{eq:scalar-middle} holds for every \(\omega_\rho\): each scalar sector is at a fixed spectral radius from the entire qubit block. For any choice of middle state,
\[
\dist(\omega_L,\omega_R)
<
\dist(\omega_L,\omega_\rho)+\dist(\omega_\rho,\omega_R)
=\frac1{\mu_L}+\frac1{\mu_R}.
\]
Thus the direct scalar--scalar distance is strictly shorter than the broken scalar--middle--scalar reading, giving a consistency check on the two scalar links.

\section{Discussion and Outlook}\label{sec:discussion}

The result is an exact scalar-reference spectral realization of the standard qubit trace-distance geometry on the full Bloch ball. The identity
\[
\dist(\omega_\rho,\omega_\sigma)=\frac{2}{\Lambda}T(\rho,\sigma)
\]
turns Connes distance into a calibrated Helstrom distance: the spectral length \(\dist\) recovers \(T(\rho,\sigma)\) after multiplication by \(\Lambda/2\). Since trace distance is a basic benchmark in discrimination, tomography, state verification, and direct estimation \cite{ChenEtAl2022}, the point is not to replace that metric, but to give an exact finite-spectral route to it.

The trace-norm mechanism itself is already present in the one-scalar identity-block baseline of Appendix~\ref{app:one-scalar}. A single scalar anchor removes the bare \(M_2(\C)\) finite-slice degeneracy by making all Pauli directions enter the ball condition:
\[
\dist_\mu(\omega_\rho,\omega_\sigma)=\frac2\mu T(\rho,\sigma).
\]
The second scalar reference upgrades this mechanism to a two-reference calibration geometry. As summarized in Fig.~\ref{fig:scalar-calibration}, the middle sector sees only the combined scale \(\Lambda\), while scalar-sector distances resolve the individual link lengths and verify the Pythagorean consistency relation \(\dist(\omega_L,\omega_R)^2=\ell_L^2+\ell_R^2\).

The same viewpoint points to several extensions. Time-varying scalar links introduce a moving calibration scale, so spectral lengths must be corrected before they are read as trace-distance information flow. Anisotropic identity-link deformations would separate an overall Helstrom scale from axis-sensitive distortions. The Pythagorean scalar-sector identity also suggests a calibration reading of similar distance laws in doubled and phase-space spectral geometries \cite{LinHeng2020,LinHeng2022,KumarChakraborty2018}. A separate future question is whether the scalar-reference calibration mechanism can be engineered, or at least simulated, in a concrete laboratory setting.

\begin{acknowledgments}
The author thanks Biswajit Chakraborty for previous mentorship and related past collaborations, and acknowledges generous support from the DFG through a Walter Benjamin Fellowship, grant no. 515782239.
\end{acknowledgments}

\appendix

\section{\texorpdfstring{Bare \(M_2(\C)\) Finite-Slice Calculation}{Bare M2(C) Finite-Slice Calculation}}\label{app:bare-M2}

For completeness, we recall the elementary obstruction behind the bare irreducible \(M_2(\C)\) example of Iochum et al.~\cite{IochumKrajewskiMartinetti2001}. In this appendix \(\dist\) denotes the spectral distance induced by the bare Dirac operator. Let \(D_0\) be a non-scalar Hermitian \(2\times2\) Dirac operator. Using the same midpoint--gap convention as in Eq.~\eqref{eq:middle-param}, write
\[
\begin{gathered}
D_0=\bar d\,I_2+\frac{\gamma}{2}\,n\cdot\sigma,
\qquad |n|=1,\\
\bar d=\frac{d_++d_-}{2},
\qquad
\gamma=|d_+-d_-|>0,
\end{gathered}
\]
where \(d_\pm\) are the two eigenvalues of \(D_0\).
For a self-adjoint algebra element
\[
a=a_0I_2+h\cdot\sigma,\qquad h\in\R^3,
\]
the commutator \([D_0,a]\) simplifies via the Pauli identity,
\[
[n\cdot\sigma,h\cdot\sigma]=2i(n\times h)\cdot\sigma,
\]
producing the following Lipschitz seminorm (the operator norm for $(n\times h)\cdot\sigma$ is simply $|n\times h|$):
\begin{equation}\label{eq:bare-m2-lip}
L_{D_0}(a)=\norm{[D_0,a]}=\gamma |n\times h|.
\end{equation}
Thus the ball condition \(L_{D_0}(a)\le1\) constrains only the component of \(h\) perpendicular to the Dirac axis \(n\).

For Bloch states \(\omega_r,\omega_s\) on \(M_2(\C)\),
\[
\omega_r(a)-\omega_s(a)=(r-s)\cdot h.
\]
If \((r-s)\cdot n\ne0\), the choice \(h=tn\) has \(L_{D_0}(a)=0\) for all \(t\) and makes the state separation unbounded. If \((r-s)\cdot n=0\), then the parallel component of \(h\) does not contribute to the state difference, and Cauchy--Schwarz gives
\[
\abs{(r-s)\cdot h}\le |r-s|\,|h_\perp|\le \frac{|r-s|}{\gamma}.
\]
The upper bound is attained by the optimal algebra element with \(h=(r-s)/(\gamma |r-s|)\). Hence the bare \(M_2(\C)\) distance is
\begin{equation}\label{eq:bare-m2-slice}
\dist(\omega_r,\omega_s)=
\begin{cases}
|r-s|/\gamma, & (r-s)\cdot n=0,\\[0.6ex]
+\infty, & (r-s)\cdot n\ne 0.
\end{cases}
\end{equation}

\section{Pure-State and Scalar-Sector Derivations}\label{app:secondary}

\begin{proof}[Proof of Proposition~\ref{prop:secondary}]
For \(\dist(\omega_L,\omega_R)\), the proof is complementary to the middle--middle reduction in Theorem~\ref{thm:mixed}. The objective is now \(|x-z|\), so the scalar variables \(x,z\) enter the quantity being maximized, while the middle observable \(m\) only controls the Lipschitz cost. Thus we define
\[
C_{LR}(x,z):=\inf_{m=m^*}L_D(x,m,z).
\]
The same fiber criterion gives
\begin{align*}
\dist(\omega_L,\omega_R)
&=
\sup_{\substack{x,z,\,m=m^*\\
L_D(x,m,z)\le1}}
\abs{x-z}\\
&=
\sup_{\substack{x,z\\ C_{LR}(x,z)\le1}}
\abs{x-z}.
\end{align*}
Here the infimum is the test for whether the fixed-\((x,z)\) fiber meets the Lipschitz ball; the computation below shows that it is attained.

To compute \(C_{LR}\), set
\[
Q(\lambda):=\mu_L^2(\lambda-x)^2+\mu_R^2(\lambda-z)^2 .
\]
By Lemma~\ref{lem:comm-reduction}, admissibility of a chosen \(m\) means
\[
Q(\lambda_i)\le 1
\]
for each eigenvalue \(\lambda_i\) of \(m\). For fixed \(x,z\), the function \(Q\) is minimized at
\[
\lambda_*=\frac{\mu_L^2x+\mu_R^2z}{\Lambda^2},
\]
with minimum
\[
\frac{\mu_L^2\mu_R^2}{\Lambda^2}(x-z)^2.
\]
Thus \(Q(\lambda_i)\ge Q(\lambda_*)\) for every eigenvalue channel, and the bound is sharp by taking \(m=\lambda_*I_2\). Hence
\begin{equation}\label{eq:CLR-cost}
C_{LR}(x,z)=\frac{\mu_L\mu_R}{\Lambda}\abs{x-z}.
\end{equation}
The condition \(C_{LR}(x,z)\le1\) is therefore equivalent to
\[
\abs{x-z}\le \sqrt{\frac1{\mu_L^2}+\frac1{\mu_R^2}},
\]
and equality is attained by choosing \(|x-z|\) at the bound. This proves \eqref{eq:LR}.

For \eqref{eq:scalar-middle}, the variational problem is
\[
\dist(\omega_L,\omega_\rho)=
\sup_{\substack{x,z,\,m=m^*\\
L_D(x,m,z)\le1}}
\abs{x-\Tr(\rho m)}.
\]
Let $\rho$ be any qubit density matrix. If \(m=\lambda_1P_1+\lambda_2P_2\), positivity gives \(p_i:=\Tr(\rho P_i)\ge0\) and \(p_1+p_2=1\). Thus, by the usual convexity bound for state expectations,
\[
\omega_\rho(m)=p_1\lambda_1+p_2\lambda_2
\]
lies in the convex hull of the two eigenvalues. Hence
\[
\begin{aligned}
\abs{x-\omega_\rho(m)}
&=\abs{p_1(x-\lambda_1)+p_2(x-\lambda_2)}\\
&\le p_1\abs{x-\lambda_1}+p_2\abs{x-\lambda_2}\\
&\le \max_i\abs{x-\lambda_i}.
\end{aligned}
\]
The first inequality is the triangle inequality, and the second uses \(p_i\ge0\) with \(p_1+p_2=1\); equivalently this is the spectral form of the usual state/operator-norm bound.
By Lemma~\ref{lem:comm-reduction}, admissibility gives, for each \(i\),
\[
\mu_L^2(\lambda_i-x)^2
\le
\mu_L^2(\lambda_i-x)^2+\mu_R^2(\lambda_i-z)^2
\le 1.
\]
Thus \(\abs{x-\lambda_i}\le1/\mu_L\), and therefore \(\dist(\omega_L,\omega_\rho)\le1/\mu_L\). Equality is attained by taking \(m=\lambda I_2\), \(z=\lambda\), and \(x=\lambda+1/\mu_L\). Then \(L_D=1\) and \(\abs{x-\Tr(\rho m)}=1/\mu_L\). The formula for \(\dist(\omega_R,\omega_\rho)\) is symmetric.

For \eqref{eq:vw}, write \(P_v=\ket v\bra v=\frac12(I+r_v\cdot\sigma)\) and \(P_w=\ket w\bra w=\frac12(I+r_w\cdot\sigma)\), with \(r_v,r_w\in S^2\). Then
\[
P_v-P_w=\frac12(r_v-r_w)\cdot\sigma,
\]
so \(P_v-P_w\) has eigenvalues
\[
\pm\frac12|r_v-r_w|=\pm\sin\frac{\Theta(v,w)}2.
\]
Therefore \(\norm{P_v-P_w}_1=2\sin(\Theta(v,w)/2)\). Applying Theorem~\ref{thm:mixed} to the pure states \(P_v\) and \(P_w\) gives \eqref{eq:vw}.
\end{proof}

\section{One-Scalar Identity-Block Baseline}\label{app:one-scalar}

The middle-sector trace-norm mechanism does not require two scalar references. In the one-anchor baseline, \(\C\oplus M_2(\C)\) is represented on \(\C^2\oplus\C^2\) by
\[
\begin{gathered}
\pi(x,m)=\diag(xI_2,m),\\
\omega(x,m)=x,\qquad
\omega_\tau(x,m)=\Tr(\tau m)\quad(\tau=\rho,\sigma),\\
D_\mu=
\begin{pmatrix}
0&\mu I_2\\
\mu I_2&0
\end{pmatrix},
\qquad \mu>0 .
\end{gathered}
\]
The shared twofold multiplicity allows the scalar--qubit link to be the isotropic identity block \(\mu I_2\). For \(a=(x,m)=a^*\),
\[
[D_\mu,\pi(x,m)]
=
\begin{pmatrix}
0&\mu(m-xI_2)\\
-\mu(m-xI_2)&0
\end{pmatrix},
\]
hence
\begin{equation}\label{eq:one-scalar-lip}
L_\mu(x,m):=\norm{[D_\mu,\pi(x,m)]}
=\mu\norm{m-xI_2}_{\mathrm{op}} .
\end{equation}
Using the midpoint--gap notation of Eq.~\eqref{eq:middle-param}, the minimizing scalar value is \(x=\bar\lambda_m\), and therefore
\[
\begin{aligned}
C_\mu(m)&:=\inf_x L_\mu(x,m)=\frac{\mu\gamma_m}{2},\\
C_\mu(m)\le1&\Longleftrightarrow \gamma_m\le\frac2\mu .
\end{aligned}
\]
For \(\Delta=\rho-\sigma\), the same projection optimization as in Theorem~\ref{thm:mixed} gives
\[
\begin{aligned}
\dist_\mu(\omega_\rho,\omega_\sigma)
&=
\sup_{\substack{\gamma_m\ge0,\;P\\ \mu\gamma_m/2\le1}}
\gamma_m\abs{\Tr(\Delta P)}\\
&=\frac2\mu\sup_P\abs{\Tr(\Delta P)}
=\frac1\mu\norm{\Delta}_1.
\end{aligned}
\]
Therefore, for all qubit density matrices,
\begin{equation}\label{eq:one-scalar-middle}
\dist_\mu(\omega_\rho,\omega_\sigma)
=\frac2\mu T(\rho,\sigma).
\end{equation}
The scalar-to-middle distance is also fixed:
\begin{equation}\label{eq:one-scalar-link}
\dist_\mu(\omega,\omega_\rho)=\frac1\mu .
\end{equation}
Indeed, Eq.~\eqref{eq:one-scalar-lip} implies \(\abs{x-\omega_\rho(m)}\le 1/\mu\), and equality is attained by taking \(m=\lambda I_2\), \(x=\lambda+1/\mu\).

\end{document}